\frenchspacing  \setlength{\pdfpagewidth}{8.5in}  \setlength{\pdfpageheight}{11in}  
\newcommand{\citet}[1]{\citeauthor{#1}~\shortcite{#1}}
\newcommand{\citep}{\cite}
\newcommand{\cI}{\ensuremath{\mathcal{I}}}
\newcommand{\be}{\begin{eqnarray}}
\newcommand{\ee}[1]{\label{#1}\end{eqnarray}}
\newcommand{\ese}{\end{eqnarray*}}
\newcommand{\bse}{\begin{eqnarray*}}
\def\beq{\begin{equation}}
\def\eeq{\end{equation}}
\def\fnote#1{\footnote}
\def\*{{{\LARGE\bf $^*$}}}
 \def\seq{{\mathop{\sigma}}}
\def\infoset{{\mathop{\rm inf}}}
\newtheorem{theorem}{Theorem}
\newtheorem{definition}{Definition}
\title{Robust Stackelberg Equilibria in Extensive-Form Games\\ and Extension to Limited Lookahead}
\author{Christian Kroer \and Gabriele Farina \and Tuomas Sandholm\\
  Computer Science Department, Carnegie Mellon University\\
\{ckroer,gfarina,sandholm\}@cs.cmu.edu}
\begin{document}

\maketitle

\begin{abstract}
  Stackelberg equilibria have become increasingly important as a solution concept
in computational game theory, largely inspired by practical problems such as
security settings. In practice, however, there is typically uncertainty
regarding the model about the opponent. This paper is, to our knowledge, the
first to investigate Stackelberg equilibria under uncertainty in extensive-form
games, one of the broadest classes of game. We introduce robust Stackelberg
equilibria, where the uncertainty is about the opponent's payoffs, as well as
ones where the opponent has limited lookahead and the uncertainty is about the
opponent's node evaluation function.
We develop a new mixed-integer program for the deterministic limited-lookahead
setting. We then extend the program to the robust setting for Stackelberg
equilibrium under unlimited and under limited lookahead by the opponent. We show
that for the specific case of interval uncertainty about the opponent's payoffs
(or about the opponent's node evaluations in the case of limited lookahead),
robust Stackelberg equilibria can be computed with a mixed-integer program that
is of the same asymptotic size as that for the deterministic setting.
 \end{abstract}

In a Stackelberg equilibrium, a \emph{leader} commits to a strategy first, and then a \emph{follower} chooses a strategy for herself. The model was first introduced in the context of competition between firms where the leader picks a quantity to supply, and then the follower picks a quantity to supply~\cite{Stackelberg34:Marktform}. Stackelberg equilibria have become important as a solution concept
in computational game theory, largely inspired by practical problems such as
security settings, where the leader is a defender who picks a mixed (i.e., potentially randomized) strategy first, and then the follower who is the attacker decides where to attack, if at all. 

Most work on Stackelberg equilibria has focused on \emph{normal-form} (aka. matrix-form) games. \citet{Conitzer06:Computinga} studied the
problem of computing an optimal strategy to commit to in normal-form games. That line of work has been extended to many security-game applications. 
In practice, there is typically uncertainty about the opponent's payoffs. 
In normal-form games this has been studied as \emph{Bayesian Stackelberg
games} where the players have private information about their own payoffs, and there is common knowledge of the prior distribution over the payoffs~\citep{Tambe11:Security,Paruchuri08:Playing}.
As an alternative, the \emph{robust} (distribution-free) approach has been suggested for security games: bounds are assumed on the follower's payoffs~\citep{Kiekintveld13:Security,Nguyen14:Regret}.

\emph{Extensive-form games (EFGs)}---i.e., tree-form games---are a very general game representation language. EFGs are exponentially more compact and also more expressive than normal-form games. \citet{Letchford10:Computing} study how to compute an optimal strategy to commit
to in EFGs and prove hardness results under several assumptions about the game
structure. \citet{Bosansky15:Computation} provide further results specifically
for perfect-information EFGs. \citet{Bosansky15:Sequence} develop a \emph{mixed-integer program (MIP)} for
computing a Stackelberg strategy, and \citet{Cermak16:Using} develop an iterative
approach based on upper-bounding solutions from extensive-form \emph{correlated}
Stackelberg equilibria. 

To our knowledge, we are the first to
consider uncertainty about the opponent in Stackelberg strategies for EFGs. This is important because EFGs are a powerful representation language and because in practice there is typically uncertainty about the opponent. We take a robust approach to modeling this uncertainty. We
introduce robust Stackelberg equilibria for EFGs, where the uncertainty is about the opponent's payoffs, as well as ones where the opponent has limited lookahead and the uncertainty is about the opponent's node evaluation function.

We develop a new MIP for the deterministic limited-lookahead
setting. We then extend the MIP to the robust setting for
Stackelberg equilibrium under unlimited and under limited lookahead by the opponent. We show that for the
specific case of interval uncertainty about the opponent's payoffs (or about the opponent's node
evaluations in the case of limited lookahead), robust Stackelberg equilibria can be
computed with a MIP that is of the same asymptotic size as that for
the deterministic setting.

Our results for robust Stackelberg equilibria in EFGs are relevant to security-game settings with sequential interactions, where EFG models can more compactly represent certain games, as compared to a normal-form representation~\citep{Bosansky15:Sequence}.
Robust models are important in security games, where opponent models often have uncertainty, both in standard security games~\citep{Kiekintveld10:Robust,Kiekintveld13:Security,Nguyen14:Regret}, and \emph{green security games}~\citep{Nguyen15:Making}.

Our limited-lookahead results are useful for settings where it
is not always desirable to model adversaries as fully rational, but as having limited lookahead capability. This includes
settings such as biological games, where the goal is to steer an evolutionary process or
an adaptation process which typically acts myopically without lookahead~\citep{Sandholm15:Medical,Kroer16:Sequential} and security games where
opponents are often assumed to be myopic (which can be especially well motivated when the number of
adversaries is large~\citep{Yin12:TRUSTS} or in the case of \emph{opportunistic criminals}~\citep{Zhang16:Keeping,Rosenfeld17:Traffic}).
Our model of limited lookahead is an extension of that of
\citet{Kroer15:Limited} to a robust setting. \citet{Kroer15:Limited} gave a MIP
for computing an optimal strategy to commit to in the deterministic setting. We
show an alternative MIP for computing such a strategy to commit to, which we
then extend to the robust setting.

Finally, the question of robust variants of optimization problems has been
studied extensively in the optimization
literature~\citep{Tal02:Robust,Tal09:Robust,Bertsimas11:Theory}. In that
literature, the assumption is that we are given some \emph{nominal} mathematical
program, and then the robust variant requires that each constraint in the
nominal program holds with respect to every instantiation of a set of
uncertainty parameters. This makes the setting substantially different from our
setting, where there is no nominal program: the best response of the follower
does not need to be a best response for every uncertainty instantiation (this
would be the equivalent to robust optimization, and often infeasible), but
rather the best response is chosen after the uncertainty parameters are chosen.
 
\section{Extensive-Form Games}
Extensive-form games (EFGs) can be thought of as a game tree, where each node in
the tree corresponds to some history of actions taken by all players. Each node
belongs to some player, and the actions available to the player at a given node
are represented by the branches. Uncertainty is modeled by having a special
player, \emph{Nature}, that moves with some predefined fixed probability
distribution over actions at each node belonging to Nature. EFGs model imperfect
information by having groups of nodes in \emph{information sets}, where an information set  is a group of
nodes all belonging to the same player such that the player cannot distinguish
among them. Finally we assume \emph{perfect recall}, which requires that no
player forgets information they knew earlier in the game.

\begin{definition}
	A \emph{ leader-follower two-player extensive-form game with imperfect information and perfect recall} $\Gamma$ is a tuple $(H, Z, A, P, f_c, \mathcal{I}_l, \mathcal{I}_f, u_l,u_f)$ composed of:
	\begin{itemize}[nolistsep,itemsep=1mm]
	\item $H$: a finite set of possible sequences (or histories) of actions, such that the empty sequence $\varnothing \in H$, and every prefix $z$ of $h$ in $H$ is also in $H$.
	\item $Z\subseteq H$: the set of terminal histories, i.e., those sequences that are not a proper prefix of any sequence.
	\item $A$: a function mapping $h \in H\setminus Z$ to the set of available actions at non-terminal history $h$.
	\item $P$: the player function, mapping each non-terminal history $h \in H\setminus Z$ to $\{l, f, c\}$, representing the player whose turn it is to move after history $h$. If $P(h)=c$, the player is Chance.
	\item $\mathcal{C}$: a function assigning to each $h\in H$ the probability of
    reaching $h$ due to nature (i.e. assuming that both players play to reach
    $h$).
	\item $\mathcal{I}_i$, for $i\in\{l,f\}$: partition of $\{h\in H: P(h)=i\}$ with the property that $A(h)=A(h')$ for each $h,h'$ in the same set of the partition. For notational convenience, we will write $A(I)$ to mean $A(h)$ for any of the $h\in I$, where $I\in \mathcal{I}_i$. ${\cal I}_i$ is the information partition of player $i$, while the sets in ${\cal I}_i$ are called the information sets of player $i$.
	\item $u_i$: utility function mapping $z\in Z$ to the utility 
    gained by player $i$ when the terminal history is reached.
	\end{itemize}
	We further assume that all players have perfect recall.
\end{definition}
We will use the more relaxed term \emph{extensive-form game}, or EFG, to mean a two-player extensive-form game with imperfect information and perfect recall.

In this paper we will investigate settings where there is uncertainty
about the follower's utility function $u_f$. Specifically, the follower's utility
can be any function from some given \emph{uncertainty set} $U_f$ consisting of
functions that map from the set of leaf nodes to $\mathbb{R}$. We leave
the exact structure of $U_f$ undefined for now; in our algorithmic section we
show that the case where each leaf has independent interval uncertainty can be
solved using a MIP.

A strategy for a player $i$ is usually represented in \emph{behavioral form},
which consists of probability distributions over actions at each information set
in $\mathcal{I}_i$. In this paper we will focus on an alternative, but
strategically equivalent, representation of the set of strategies, called the
\emph{sequence
  form}~\citep{Romanovskii62:Reduction,Koller96:Efficient,Stengel96:Efficient}.
In the sequence form, actions are instead represented by \emph{sequences}. A
sequence $\sigma_i$, is an ordered list of actions taken by player $i$ on the
path to some history $h$. In perfect-recall games, all nodes in an information
set $I\in \mathcal{I}_i$ correspond to the same sequence for player $i$. We let
$\seq(I)$ denote this sequence. Given a sequence $\sigma_i$ and an action $a$
that Player $i$ can take immediately after $\sigma_i$, we let $\sigma_ia$ denote
the resulting new sequence. The set of all sequences for player $i$ is denoted
by $\Sigma_i$. Instead of directly choosing the probability to put on an action,
in the sequence form the probability of playing the entire sequence is chosen; 
this is called the \emph{realization probability} and is denoted by
$r_i(\sigma_i)$. A choice of realization probabilities for every sequence
belonging to Player $i$ is called a \emph{realization plan} and is denoted by 
$r_i:\Sigma_i \rightarrow [0,1]^{|\Sigma_i|}$. This representation relies on
perfect recall: for any information set $I\in \mathcal{I}_i$ we have that each
action $a\in A(I)$ is uniquely represented by a single sequence $\sigma_i =
\seq(I)a$, since $\seq(I)$ corresponds to exactly one sequence. This gives us a simple way to convert any strategy in sequence form to a
behavioral strategy: the probability of playing action $a\in A(I)$ at
information set $I$ is simply $\frac{r_i(\seq(I)a)}{r_i(\seq(I))}$. For a
sequence $\sigma=\sigma'a$, we let the information set such that $a \in A(I),
\seq(I)=\sigma'$ be denoted by $\infoset(\sigma)$.

It will be convenient to have function expressing expected values for a given
pair of sequences. Given two sequences $\sigma_l$ and $\sigma_f$, we let
\[
  g_l(\sigma_l,\sigma_f)=\sum_{h \in
  Z;\seq_f(h)=\sigma_f;\seq_l(h)=\sigma_l}\mathcal{C}(h)u_l(h),\]\[
  g_f^{u_f}(\sigma_l,\sigma_f)=\sum_{h \in
  Z;\seq_f(h)=\sigma_f;\seq_l(h)=\sigma_l}\mathcal{C}(h)u_f(h)
\]
be the expected utilities, for the leader and follower respectively, over leaf nodes that are reached with $\sigma_f,$ and
$\sigma_l$ as the corresponding last player sequences. The function
for the follower $g_f^{u_f}$ depends on the choice of utility function $u_f$,
whereas we always know the utility function for the leader.\footnote{In Stackelberg equilibrium, the follower does not have to be concerned about the leader's utility function because the leader commits to his strategy and declares his strategy to the follower.} Given two
realization plans $r_l,r_f$ and a utility function $u_i$, we overload notation
slightly and let the expected value for Player $i$ induced by the realization
plans be denoted by
\[
u_i(r_l,r_f) = \sum_{\sigma_l\in\Sigma_l,\sigma_f\in\Sigma_f} r_l(\sigma_l)r_f(\sigma_f)g_i(\sigma_l,\sigma_f).
\] \section{Stackelberg Setting}
We will focus on settings where the leader first commits to a strategy that the
follower observes. The follower then plays a best response to the leader
strategy. A \emph{strong Stackelberg equilibrium} (SSE) is a pair of strategies
$r_l,r_f$ such that $r_f$ is a best response to $r_l$ and $r_l$ is a solution to
the optimization problem of maximizing $u(r_l,r_f)$ over $r_l$ and $r_f$,
subject to the constraint that $r_f$ is a best response to $r_l$. This
definition implies the common assumption that the follower breaks ties in favor
of Player
$l$~\citep{Tambe11:Security,Conitzer06:Computinga,Paruchuri08:Playing}. A
\emph{weak Stackelberg equilibrium} assumes minimization over the set of optimal
best responses.  
\section{Limited-Lookahead Model}
We will also consider a limited-lookahead variant of EFGs. There has
been a significant amount of work on limited lookahead in perfect-information
games (such as chess and checkers) in the AI community. Modeling limited
lookahead in imperfect-information games (that have information sets) is more
intricate. A model for that was presented recently~\cite{Kroer15:Limited}, and
we use that model. In that model, the follower can only look ahead $k$ steps. He
uses a \emph{node-evaluation function} $\tilde{u} : H \rightarrow \mathbb{R}$
that associates a heuristic utility with any node in the game tree. At any
information set $I\in \mathcal{I}_f$, the follower has a set of nodes
$\tilde{H}_I\subset H$ called the \emph{lookahead frontier}. When choosing his
action at information set $I$, the follower chooses an action that maximizes the
expected value of $\tilde{u}$, assuming that they choose actions so as to
maximize $\tilde{u}$ at any follower information sets reached before
$\tilde{H}_I$. We let $g_I(\sigma_l,\sigma_f)$ be the expected value over
lookahead-frontier nodes according to the node-evaluation function (analogous to
$g_i$ for the setting without limited lookahead). We assume that for any
information set $I'\in \mathcal{I}_f$ that comes after $I$, all the nodes of
$I'$ are entirely contained in the set of nodes that precede $\tilde{H}_I$, or
entirely disjoint with the set of preceding nodes (this is in order to avoid any
information sets belonging to the follower being only partially contained in the
hypothetical decision making under $I$). We let the set of information sets that
come after $I$ such that their nodes are all preceding $\tilde{H}_I$ be denoted
by $\cI_I$. We $\Sigma_f^I \subseteq \Sigma_f$ denote the set of all sequences
beneath a given information set $I$ that are within the lookahead frontier.

In the prior paper on limited lookahead in imperfect-information games~\cite{Kroer15:Limited} it was assumed that the leader knows the follower's node evaluation function exactly. That seems quite unrealistic. Therefore, we will extend the work to the case where the leader has uncertainty about the follower's node evaluation function.
  
\section{Best Responses and how to Compute Them}

Our solution concept will depend on the notion of a \emph{best response} for the
follower. For a given leader strategy $r_l$ and utility function $u_f\in U_f$,
the set of best responses is
\begin{align*}
BR(r_l,u_f) = \{r_f : u_f(r_l,r_f) = \max_{r_f'} u_f(r_l, r_f')\}.
\end{align*}

Given a strategy $r_l$ for the leader and a utility function $u_f$, the value of each information set can be
computed with the following feasibility program (this holds outside of a
leader-follower setting as well):

\noindent\begin{minipage}{8.4cm}
    \begin{equation*}
  	  v_{\infoset_f(\sigma_f)} = s_{\sigma_f} + \sum_{\mathclap{\substack{I'\in \cI_f\nonumber\\\seq_f(I')=\sigma_f}}} v_{I'} + \sum_{\sigma_l \in \Sigma} r_l(\sigma_l)g_f^{u_f}(\sigma_l,\sigma_f)
    \end{equation*}\vspace{-5mm}
    \begin{equation}
      \label{eq:infoset_value_constraint}
	  \hspace{4.5cm}\forall I\in \cI_f, \sigma_f=\seq_f(I)
    \end{equation}
    \vspace{-1mm}
    \begin{equation}
      0 \leq s_{\sigma_f} \leq M (1-b_f(\sigma_f)) \hspace{2.3cm} \forall \sigma_f\in \Sigma_f
    \end{equation}
	\begin{equation}
	  \textstyle\sum_{a\in A(I)}b_f(\sigma a) = 1 \hspace{1.5cm} \forall I\in \cI_f, \sigma_f=\seq_f(I)
	\end{equation}
	\begin{equation} 
	  b_f(\sigma_f) \in \{0,1\} \hspace{4.cm} \sigma_f \in \Sigma_f \label{eq:br_01}
	\end{equation}
\end{minipage}\vspace{2mm}

\noindent The variables $v_I$ represent the value of a given information set $I$, $b_f(\sigma_f)$ represents whether $\sigma_f$ is a best response at its
respective information set, and $s_{\sigma_f}$ represents how much less utility
the follower gets by following the sequence $\sigma_f$ rather than the optimal
action at $\infoset(\sigma_f)$.
  It is easy to show via induction that
  the feasibility MIP given in
  equations~(\ref{eq:infoset_value_constraint}-\ref{eq:br_01}) computes a best
  response to $r_l$ and the variables $v_I$ represent the values of
  information sets $I$ when best-responding to $r_l$:
  For the base case of an information set with no future
  information sets belonging to the follower, disregarding $s_{\sigma_f}$, the
  RHS of \eqref{eq:infoset_value_constraint} clearly represents the value of
  choosing $\sigma_f$ at the information set. Now, since all $s_{\sigma_f}$ are
  nonnegative and \eqref{eq:infoset_value_constraint} is an equality, it follows
  that $v_I$ upper bounds the value of each individual sequence at $I$. But
  since $s_{\sigma_f}=0$ for some $\sigma_f$, it must be an equality for said
  $\sigma_f$. Thus $v_I$ upper bounds the value of all sequences at $I$, but is
  also equal to the value of some sequence, and therefore it represents the
  value when best responding. Applying the inductive hypothesis to any
  information set $I$ that has future information sets belonging to the follower
  reduces the expression for $v_I$ to one that is equivalent to the base case.
 
\section{Extension to Uncertainty about the Opponent}
We now extend the EFG model to incorporate uncertainty about the
follower's utility function. We will take a robustness approach, where we care
about the worst-case instantiation of the uncertainty set $U_f$. For
limited-lookahead EFGs we will analogously consider uncertainty over the
node-evaluation function.

Due to the uncertainty (represented by the uncertainty set $U_f$), defining a Stackelberg
equilibrium is not straightforward.
We take the perspective that a robust Stackelberg solution is a strategy for the
leader that maximizes the leader utility in the worst-case instantiation of
$U_f$:
\begin{definition}
  A \emph{robust strong Stackelberg solution} (RSSS) is a realization
  plan $r_l$ such that
  \[
    r_l \in \arg \max_{r_l' \in R_l} \quad \inf_{u_f\in U_f} \quad \max_{r'_f \in
      BR(r_l,u_f)} u_l(r_l,r'_f).
  \]
  \end{definition}
The robustness is represented by the minimization over $U_f$. Intuitively, if
the actual instantiation of $u_f$ does not take on the minimizer over $U_f$,
the leader can only receive better utility, so we are computing the maximin
utility against the robustness. Typically one is interested in finding
an RSSS strategy for the leader, but we
nonetheless define the entire equilibrium concept as well:
\begin{definition}
  A \emph{robust strong Stackelberg equilibrium} (RSSE) is a realization plan
  $r_l$ and a (potentially uncountably large) set of realization plans
  $\{r_f^{u_f} : \forall u_f\in U_f\}$ such that $r_l$ is an RSSS and $r_f^{u_f}
  \in BR(r_l,u_f)$ for all $u_f\in U_f$.
\end{definition}
Whether an RSSE is even practical to represent is highly dependent on the
structure of the specific game and uncertainty sets at hand, as it would
frequently need to be represented parametrically. On the other hand, once we
have $r_l$, the best response for a specific $u_f$ can easily be computed. One
method for doing this is to first compute the follower value $u^*$ under $u_f$
when best responding to $r_l$ (e.g., via a single tree traversal), and then
solving the \emph{linear program (LP)} that consists of maximizing the leader's utility over
the set of follower strategies that achieve $u^*$ (this can be done by adding a
single constraint to the sequence-form best-response LP given by \citet{Stengel96:Efficient}).

One might consider applying the robustness after the follower chooses her
strategy (in a sense, swapping the inner max and min). In this case, we cannot
represent this as a minimization on the inside since the set of best responses
is defined with respect to the choice of $u_f$. Arguably the most natural way to
apply robustness after the best response of the follower would be to ask for a
pair of strategies $r_l,r_f$ such that $r_f$ is a best response no matter the
instantiation of $u_f$. This definition of robustness would allow us to apply
standard robust optimization techniques to any Stackelberg MIP. However, this definition has several drawbacks.
First, if we are applying a robust model, we are often interested in
maximizing our worst-case utility. By applying robustness after choosing $r_f$,
we would not be doing that, but instead would be maximizing utility subject to the
constraint that we want to be sure what the follower response is.
Second, a robust Stackelberg equilibrium defined that way would not
necessarily exist: if there is overlap between the range of possible
utilities associated with a pair of actions at some information set, there would
be no way to guarantee that a single action will always be a best response.

 \section{MIP for Full-Certainty Setting}

We now give a MIP for computing a Stackelberg equilibrium in a game where the
follower has limited lookahead. 

\noindent\begin{minipage}{8.4cm}
	\begin{equation}
	  \label{eq:mip_obj_full_certainty}
	  \max_{p,r,v,s} \sum_{z\in Z} p(z)u_l(z)\mathcal{C}(z)
	\end{equation}
	\begin{equation*}
	  v_{I} = s_{\sigma_f} + \sum_{\mathllap{I'\in \cI_f}:\seq_f(I')\mathrlap{=\sigma_f}} v_{I,I'} + \sum_{\sigma_l \in \Sigma_l} r_l(\sigma_l)g_I(\sigma_l,\sigma_f)
	\end{equation*}\vspace{-3mm}
	\begin{equation}
	  \hspace{4cm}\label{eq:infoset_slack_full_certainty} \forall \sigma_f\in \Sigma_f, I=\infoset_f(\sigma_f)
	\end{equation}
	\vspace{-1mm}
	\begin{equation*}
	  v_{I,\infoset(\sigma_f)} = s_{\sigma_f}^{I} + \sum_{\mathllap{I'\in \cI_f}:\seq_f\mathrlap{(I')=\sigma_f}} v_{I,I'} + \sum_{\sigma_l \in \Sigma_l} r_l(\sigma_l)g_I(\sigma_l,\sigma_f)
	\end{equation*}\vspace{-3mm}
	\begin{equation}
	  \hspace{5cm}\label{eq:mip_lookahead_constraint_full_certainty}
	  \forall I\in \mathcal{I}, \sigma_f\in \Sigma_f^I 
	\end{equation}
	\vspace{-1mm}
	\begin{equation}
	  \label{eq:sequence_constraints_full_certainty} r_i(\emptyset)=1 \hspace{4.3cm} \forall i\in \{l,f\} \hspace{.3cm}
	\end{equation}
	\begin{equation}\textstyle
    \label{eq:sequence_constraints_full_certainty2}
	  r_i(\sigma_i)=\sum_{a\in A(I_i)}r_i(\sigma_ia) \hspace{.7cm} \forall i\in \{l,f\}, I_i\in \cI_i \hspace{.3cm}
	\end{equation}
	\begin{equation}
	  \label{eq:slack_constraints_full_certainty}
	  0\leq s_{\sigma_f} \leq \left( 1-r_f(\sigma_f) \right)M \hspace{2cm} \forall \sigma_f\in \Sigma_f 
	\end{equation}
	\begin{equation}
	0\leq s_{\sigma_f}^I \leq \left( 1-r_f^I(\sigma_f) \right)M \hspace{.8cm} \forall I\in \cI_f, \sigma_f^I\in \Sigma_f^I
	\end{equation}
	\begin{equation}
	  r_f(\sigma_f) \in \{0,1\} \hspace{3.6cm} \forall \sigma_f \in \Sigma_f
	\end{equation}
	\begin{equation}
	  \label{eq:best_response_full_certainty}
	  r^I_f(\sigma_f) \in \{0,1\} \hspace{2.6cm} \forall I\in \cI_f\sigma_f \in \Sigma_f^I
	\end{equation}
	\begin{equation}
	  \label{eq:leaf_prob_full_certainty}
	  0 \leq p(z) \leq r_i(\seq_i(z)) \hspace{1.7cm} \forall i\in \{l,f\}, z\in Z
	\end{equation}
	\begin{equation}
	  \label{eq:leaf_prob_sum_full_certainty}
	  1 = \textstyle\sum_{z\in Z} p(z) \mathcal{C}(z)\hspace{4.5cm}
	\end{equation}
	\begin{equation} 
	  \label{eq:full_certainty_mip_last}
	  0 \leq r_l(\sigma_l) \leq 1 \hspace{4.0cm} \forall \sigma_l \in \Sigma_l
	\end{equation}
\end{minipage}\vspace{2mm}
This MIP is an extension of the MIP given by \citet{Bosansky15:Sequence} to
the limited-lookahead setting of \citet{Kroer15:Limited}.
Eq. \eqref{eq:mip_obj_full_certainty} is the expected leader value over leaf nodes. Equations \eqref{eq:infoset_slack_full_certainty} to
\eqref{eq:best_response_full_certainty} set up best-response constraints for
each follower information set, as well as for each pair of information sets
$I,I'$ such that $I'\in \mathcal{I}_I$ (these constraints are completely
analogous to (\ref{eq:infoset_value_constraint})-(\ref{eq:br_01}) except that
the constraints involving $v_{I,I'}$ must be set up for each $I$ in order to
represent best responses when applying the lookahead evaluation function at $I$).
Equations \eqref{eq:leaf_prob_full_certainty} and~\eqref{eq:leaf_prob_sum_full_certainty}
ensure that the probabilities over leaves are correct. Finally
\eqref{eq:sequence_constraints_full_certainty}, \eqref{eq:sequence_constraints_full_certainty2}, and
\eqref{eq:full_certainty_mip_last} ensure that $r_l$ is a valid leader strategy.
 
\section{MIP with Uncertainty about Follower Payoff}
We now move to the computation of RSSS for the setting with uncertainty about
follower payoff but no limited lookahead. We will consider a particular class of
uncertainty functions: interval uncertainty on each leaf payoff. More
concretely, the uncertainty set will be
\[
  U_f = \{ u_f : u_f(h)\in [L(h),U(h)], \forall h\in Z\},
\]
where $L(h),U(h)$ are given upper and lower bounds on the interval that the
payoff for leaf node $h$ must be chosen from.

One issue that now arises is that we may not be able to make a single action
optimal: if the maximum-to-minimum utility intervals for two sequences are
guaranteed to overlap we cannot make either sequence the optimal choice for the
follower player. Instead, we allow choosing both sequences, and we then
assume that the leader player receives the minimum over the two.
Intuitively, this can be thought of as a zero-sum game played within the space
of actions made optimal for the follower player (a similar technique was
used in~\citet{Kroer15:Limited}). More generally, we may have $k>1$
actions at a given information set that can all be made optimal under various
instantiations of the utility function. We now introduce a set-valued function
that, under some given strategy for the follower $r_f$, returns the set of
actions at a given follower information set that can be made optimal under some
instantiation of the utility function, given the tie-breaking rule,
\[
  A_I(r_f) = \{ a\in A_I : \nexists\ a'\in A_I,\ v_I^L(a') \geq v_I^U(a) \}.
\]
For any $a\in A_I(r_f)$, the minimization over the uncertainty can choose an
instantiation making $a$ the only best-response action at $I$. Conversely, for
$a\notin A_I(r_f)$, even if the utility function is chosen to maximize the value
of action $a$, there exists some other action $a'$ whose worst-case instantiation is
at least as good; if $a$ leads to better leader utility than $a'$ then the
minimization over utility functions will not allow them to be tied, and if $a$
leads to worse utility than $a'$, then even if a utility function causing a tie
is chosen, the best-response tie-breaking in favor of the leader means that $a'$
will be chosen. Thus, $a'$ (or some other action) is always chosen over $a$.\footnote{Here we rely on the
  assumption that every action has a strict inequality $v_I^U(a) > v_I^L(a)$.
  Without this assumption our MIP still works, but the math becomes
  more cumbersome.}

The function $A_I(r_f)$ is illustrated in
Figure~\ref{fig:active_actions}. The general intuition can be seen from the
figure: the dotted line denotes the split between potentially optimal actions
(black bars) and actions that cannot be made optimal through any
utility-function choice (opaque bars). Note that the dotted line is touched by
interval end-points from both sets: this means that the two actions could be
tied, but the lower-value would never be chosen, since it is either worse for
the leader, in which case the tie-breaking does not choose it even in case of a
tie, or if it is better then the minimization over the intervals will break the
tie and make it inoptimal.

The intuition behind our robust MIP consists of
three components: 1) the best-response
feasibility MIP described in \eqref{eq:infoset_value_constraint}-\eqref{eq:br_01}, instantiated independently for both the
set of maximal and minimal valuation functions, 2) a set of constraints for computing the
set $A_I(r_f)$ for a given $r_f$ via  
best-response values for the maximal and minimal  utility functions, and 3) a
minimization similar to the dual best-response LP from the standard
sequence-form LP~\citep{Stengel96:Efficient}.
\begin{figure}
  \centering
  \scalebox{0.7}{
  \begin{tikzpicture}
    \draw[gray,dashed] (-0*1.21, -.2*1.2) -- (-0*1.21, 1.8);
    \path[fill=black] (.3*1.21, 0*1.21) rectangle (1.3*1.21, .15*1.21);
    \path[fill=black] (-0*1.21, .3*1.21) rectangle (1.0*1.21, .45*1.21);
    \path[fill=black] (-1.0*1.21, .6*1.21) rectangle (1.6*1.21, .75*1.21);
    \path[draw, fill=white] (-2.2*1.21, .9*1.21) rectangle (-.85*1.21, 1.05*1.21);
    \path[draw, fill=white] (-1.9*1.21, 1.2*1.21) rectangle (-.0*1.21, 1.35*1.21);
    \draw[->] (-2.6*1.2, -.2*1.2) -- (1.9*1.2, -.2*1.2);
    \node at (2.4*1.2, -.2*1.2) {Utility};
  \end{tikzpicture}
  }
  \caption{A set of action value-uncertainty intervals.}
  \label{fig:active_actions}
\end{figure}
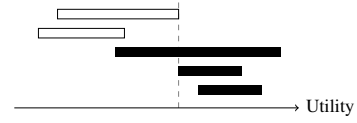

In the robust MIP given below, $g_f^U,g_f^L$ are the functions giving the
expected value over leaf nodes consistent with a pair of sequences, when every
node has its payoff set to the maximal ($g_f^U$) and minimal ($g_f^U$) payoff, respectively.

\noindent\begin{minipage}{8.4cm}
	\begin{equation}
	  \label{eq:mip_obj_robust}
	  \min_{r,v,s,y} y_0 
	\end{equation}
	\begin{equation*}
	  y_{\infoset_f(\sigma_f)} \geq\sum_{\mathclap{\substack{I'\in \cI_f\\\seq_f(I')=\sigma_f}}} y_{I'} - \sum_{\sigma_l \in \Sigma_l} g_l(\sigma_l,\sigma_f) r_l(\sigma_l) - M(1-r_f(\sigma_f))
	\end{equation*}\vspace{-6mm}
	\begin{equation}
	  \hspace{6cm}\forall \sigma_f \in \Sigma_f \label{eq:zerosum_dual_constraint_robust}
	\end{equation}
	\vspace{-1mm}
				\begin{equation*}
	  v_{\infoset_f(\sigma_f)}^q = s^q_{\sigma_f} + \sum_{\mathclap{\substack{I'\in \cI_f\\\seq_f(I')=\sigma_f}}} v^q_{I'} + \sum_{\sigma_l \in \Sigma} r_l(\sigma_l)g^q_f(\sigma_l,\sigma_f)
	\end{equation*}\vspace{-5mm}
	\begin{equation}
	\hspace{4.27cm}\forall \sigma_f \in \Sigma_f, q\in \{U,L\}  \label{eq:infoset_value_constraint_robust}
	\end{equation}
	\vspace{-1mm}
	\begin{equation}
	  \label{eq:slack_constraints_robust}
	0 \leq s^q_{\sigma_f} \leq M (1-b_f^q(\sigma_f)) \hspace{.5cm} \forall \sigma_f\in \Sigma_f, q\in \{U,L\}
	\end{equation}
	\begin{equation}
	  \textstyle\sum_{{a\in A(I)}}b_f^q(\seq_f(I) a) = 1 \hspace{.8cm} \forall q\in \{U,L\}, I\in \cI_f
	\end{equation}
	\begin{equation}
	  \label{eq:robust_mip_last_br}
	  b_f^q(\sigma_f) \in \{0,1\} \hspace{1.9cm} \forall \sigma_f \in \Sigma_f, q\in \{U,L\}
	\end{equation}
	\begin{equation}
	\label{eq:optimal_actions_robust}
	  v_I^U - s_{\sigma_f}^U \geq v_I^L - M(1-r_f(\sigma_f)) \hspace{1.5cm} \forall \sigma_f \in \Sigma_f 
	\end{equation}
	\begin{equation}
	  \label{eq:inoptimal_actions_robust}
	  v_I^U - s_{\sigma_f}^U \leq v_I^L + Mr_f(\sigma_f) \hspace{2cm} \forall \sigma_f \in \Sigma_f
	\end{equation}
	\begin{equation}
	  \label{eq:empty_sequence_constraint_robust}
	  r_i(\emptyset)=1 \hspace{4.5cm} \forall i\in \{l,f\}
	\end{equation}
	\begin{equation}
	  \label{eq:sequence_constraints_robust}
	  r_l(\sigma)=\sum_{a\in A(I)}r_l(\sigma a) \hspace{1.6cm} \forall I\in \cI_l, \sigma=\seq_l(I)
	\end{equation}
	\begin{equation}
	  r_f(\sigma)\leq \sum_{a\in A(I)}r_f(\sigma a) \hspace{1.3cm} \forall I\in \cI_f, \sigma=\seq_f(I)
	\end{equation}
	\begin{equation}
	  r_f(\sigma_f) \in \{0,1\} \hspace{3.7cm} \forall \sigma_f \in \Sigma_f
	\end{equation}
	\begin{equation}
	  0 \leq r_l(\sigma_l) \leq 1 \hspace{4cm} \forall \sigma_l \in \Sigma_l \label{eq:mip_probability_robust}
	\end{equation}
\end{minipage}
\vspace{1mm}

\noindent Equations \eqref{eq:mip_obj_robust} and \eqref{eq:zerosum_dual_constraint_robust}
implement the minimization over the set of potentially optimal actions
$A_I(r_l)$ at a given information set $I$.
Equations \eqref{eq:infoset_value_constraint_robust} to \eqref{eq:robust_mip_last_br}
ensure that $v_I^U,v_I^L$ represent the correct value of each information set
under the maximal and minimal utility function.
Equations \eqref{eq:optimal_actions_robust} and \eqref{eq:inoptimal_actions_robust} ensure
that actions in or not in $A_I(r_l)$ can potentially be made optimal
\eqref{eq:optimal_actions_robust} or cannot be made
optimal~\eqref{eq:inoptimal_actions_robust}.
Equations \eqref{eq:empty_sequence_constraint_robust} to \eqref{eq:mip_probability_robust}
ensure that $r_l$ is a valid sequence-form leader strategy and that one more
pure strategies are active for the follower.
We prove that this MIP computes a RSSS. Due to space constraints the result is
shown in the appendix.

\subsection{MIP for Limited-Lookahead Interval Uncertainty}

We also present an extension of the full-certainty MIP for limited lookahead to
a setting with uncertainty about the limited-lookahead 
node-evaluation function. That MIP joins the ideas from both the full-certainty
MIP with limited
lookahead~(\eqref{eq:mip_obj_full_certainty}-\eqref{eq:full_certainty_mip_last})
and the robust MIP (\eqref{eq:mip_obj_robust}-\eqref{eq:mip_probability_robust})
and is thus the most comprehensive, but it combines the novel ideas from the
former two MIPs in a fairly straightforward way. Due to limited space we present
the MIP in the appendix.
 
\section{Experiments}
Using our MIPs presented in the previous section we investigated the scalability
and qualitative properties of RSSS solutions. We experimented with three kinds of EFG:
Kuhn poker (Kuhn)~\cite{Kuhn50:Simplified}, a 2-card poker variant (2-card), and a parameterized security-inspired
search game (Search). The
search game is similar to games considered by \citet{Bosansky14:Exact} and
\citet{Bosansky15:Sequence}).

Kuhn consists of a three-card deck: king, queen, and jack. Each player
first has to put a payment of 1 into the pot. Each player is then dealt one of the three cards, and the third is put
aside unseen. A single round of betting then occurs (with betting parameter $p=1$, explained below).

In 2-card, the deck consists of two kings and two jacks. Each player
first has to put a payment of 1 into the pot.
A private card is dealt to each, followed by a betting round (with betting parameter $p=2$), then a public card is dealt, followed by another betting round (with $p=4$). 

In both games, each round of betting goes as follows:
\begin{compactitem}
\item Player $1$ can check or bet $p$.
  \begin{compactitem}
  \item If Player $1$ checks Player $2$ can check or raise $p$.
    \begin{compactitem}
      \item If Player $2$ checks the betting round ends.
      \item If Player $2$ raises Player $1$ can fold or call.
        \begin{compactitem}
          \item If Player $1$ folds Player $2$ takes the pot.
          \item If Player $1$ calls the betting round ends.
          \end{compactitem}
        \end{compactitem}
      \item If Player $1$ raises Player $2$ can fold or call.
        \begin{compactitem}
        \item If Player $2$ folds Player $1$ takes the pot.
        \item If Player $2$ calls the betting round ends.
        \end{compactitem}
      \end{compactitem}
\end{compactitem}

If no player has folded, a showdown occurs.
In Kuhn poker, the player with the higher card wins in a showdown. In 2-card, showdowns have two possible
outcomes: one player has a pair, or both players have the same private
card. For the former, the player with the pair wins the pot. For the
latter the pot is split.

Kuhn poker has $55$ nodes in the game tree and $13$ sequences per player. The 2-card game tree has $199$ nodes, and $57$ sequences per player.

The search game is played on the graph shown in Figure~\ref{fig:search_game}.
It is a simultaneous-move game (which can be modeled as a turn-taking EFG with
appropriately chosen information sets). The leader controls two patrols that can
each move within their respective shaded areas (labeled P1 and P2), and at each
time step the controller chooses a move for both patrols. The follower is always
at a single node on the graph, initially the leftmost node labeled $S$ and can
move freely to any adjacent node (except at patrolled nodes,
the follower cannot move from a patrolled node to another patrolled node). The
follower can also choose to wait in
place for a time step in order to clean up their traces. 
If a patrol visits a
node that was previously visited by the follower, and the follower did not wait
to clean up their traces, they can see that the follower was there. If the
follower reaches any of the rightmost nodes they received the respective payoff
at the node ($5$, $10$, or $3$, respectively). If the follower and any patrol are
 on the same node at any time step, the follower is captured,
which leads to a payoff of $0$ for the follower and a payoff of $1$ for the
leader. Finally, the game times out after $k$ simultaneous moves, in which case
the leader receives payoff $0$ and the follower receives $-\infty$ (because we
are interested in games where the follower attempts to reach an end node). We
consider games with $k$ being $5$ and $6$. We will denote these by Search-5 and
Search-6. Search-5 (Search-6) has 87,927 (194,105) nodes and 11,830 and 69
(68,951 and 78) leader and follower sequences.
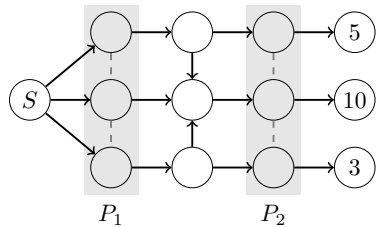
\begin{figure}
  \centering
  \scalebox{0.9}{
\begin{tikzpicture}
\path[fill=black!10!white] (.8, -1.4) rectangle (1.6, 1.4);
\path[fill=black!10!white] (3.2, -1.4) rectangle (4.0, 1.4);
\node at (1.2, -1.7) {$P_1$};
\node at (3.6, -1.7) {$P_2$};

  \node[draw, circle, minimum width=.6cm, inner sep=0] (A) at (0, 0) {$S$};
  \node[draw, circle, minimum width=.6cm] (B) at (1.2, 1) {};
  \node[draw, circle, minimum width=.6cm] (C) at (1.2, 0) {};
  \node[draw, circle, minimum width=.6cm] (D) at (1.2, -1) {};
    \node[draw, circle, minimum width=.6cm] (E) at (2.4, 1) {};
    \node[draw, circle, minimum width=.6cm] (F) at (2.4, 0) {};
    \node[draw, circle, minimum width=.6cm] (G) at (2.4, -1) {};
      \node[draw, circle, minimum width=.6cm] (H) at (3.6, 1) {};
      \node[draw, circle, minimum width=.6cm] (I) at (3.6, 0) {};
      \node[draw, circle, minimum width=.6cm] (J) at (3.6, -1) {};
            \node[draw, circle, minimum width=.6cm,inner sep=0] (K) at (4.8, 1) {$5$};
            \node[draw, circle, minimum width=.6cm,inner sep=0] (L) at (4.8, 0) {$10$};
            \node[draw, circle, minimum width=.6cm,inner sep=0] (M) at (4.8, -1) {$3$};
            
\draw[thick,->] (A) edge (B);
\draw[thick,->] (A) edge (C);
\draw[thick,->] (A) edge (D);
\draw[thick,->] (B) edge (E);
\draw[thick,->] (C) edge (F);
\draw[thick,->] (D) edge (G);
\draw[thick,->] (E) edge (F);
\draw[thick,->] (G) edge (F);
\draw[thick,->] (E) edge (H);
\draw[thick,->] (F) edge (I);
\draw[thick,->] (G) edge (J);
\draw[thick,->] (H) edge (K);
\draw[thick,->] (I) edge (L);
\draw[thick,->] (J) edge (M);

\draw[thick,gray,dashed] (B) edge (C);
\draw[thick,gray,dashed] (C) edge (D);

\draw[thick,gray,dashed] (H) edge (I);
\draw[thick,gray,dashed] (I) edge (J);
\end{tikzpicture}
   }
  \caption{The graph on which the search game is played.}
  \label{fig:search_game}
\end{figure}

All experiments were conducted using Gurobi 7.5.1 to solve MIPs, on a cluster
with 8 Intel Xeon E5607 2.2Ghz cores and 47 GB RAM per experiment.

In the first set of experiments we investigate the impact on runtime caused by
uncertainty intervals in each of the four games, without
considering limited lookahead. We compare the 
MIP by \citet{Bosansky15:Sequence}
(B\&C) for the full-certainty
setting to our robust MIP
(\eqref{eq:mip_obj_robust}-\eqref{eq:mip_probability_robust}) with an
uncertainty interval of diameter $d$ at each node in the game, for $6$ different
values of $d$. The results are shown in Table~\ref{tab:exp_runtime}.
Interestingly, our robust MIP with interval $0$ is significantly faster than the
B\&C 
MIP for the Search games. (We do not specialize our robust MIP to the
full-certainty setting but instead let Gurobi presolve away most redundant
variables and constraints. One could easily specialize it and potentially make
it even faster.) Once we add uncertainty, the MIP gets
harder to solve, with the runtime increasing for larger uncertainty intervals---except for the largest uncertainty interval where the problem starts to get
easier again.

\begin{table}
  \centering\includegraphics[scale=1]{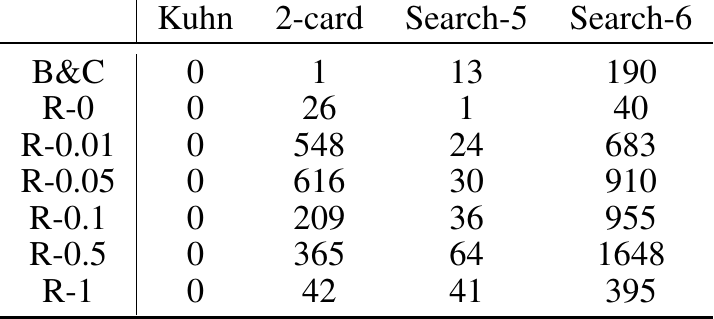}
  \caption{Runtime experiments for the MIP by \citet{Bosansky15:Sequence} (B\&
    C) and our robust Stackelberg MIP for increasing uniform uncertainty
    intervals (R-c where c is the interval radius). All runtimes are in
    seconds.}
  \label{tab:exp_runtime}
\end{table}

In the second set of experiments, we investigate the cost of computing an RSSS
against a follower utility function that is different from the one actually
employed by the follower. These experiments were conducted on the Search-5
game. On Search-6 it would take prohibitively long to conduct all the experiments, and
the experiments would not be interesting on Kuhn and 2-card because they are zero-sum games 
(the leader will end up getting the value of the game as long as the correct
utility function is contained in the uncertainty intervals). The setup is as
follows. We use our robust MIP to compute a leader strategy for the original
payoffs in Search-5. We instantiate the MIP with several different
uncertainty-interval widths (given in the leftmost column in
Table~\ref{tab:exp_misspecification_cost}). 
For each leader strategy, we then conduct a grid search
over triplets of numbers in $\{ \pm 0.1,\pm 0.5,\pm 1,\pm 2,\pm 3\}^3$, where
the three numbers correspond to a change in utility being added to each of the
three rightmost payoff nodes in Figure~\ref{fig:search_game}. 
For each payoff
change, we compute the follower's best response (breaking ties in favor of the
leader) to the leader strategy under the new game and the resulting leader
utility. The second column in Table~\ref{tab:exp_misspecification_cost} (EV) denotes the value that
the leader is expected to get if he were solving the correct game. The following
three columns, labeled $\leq 1, \leq 2, \leq 3$, show the worst utility achieved
by the leader when the grid search is restricted to payoff changes of at most 1,
2, and 3, respectively. For example, in the case $\leq 1$ we only do the grid
search over  $\{ \pm 0.1,\pm 0.5,\pm 1\}^3$
The experiment shows that when uncertainty is not taken into account, all amounts
of perturbation leads to a large decrease in leader utility. Conversely, taking
uncertainty into account leads to much better utility in almost every case.

\npdecimalsign{.}
\nprounddigits{2}
\begin{table}
  \centering\includegraphics[scale=1]{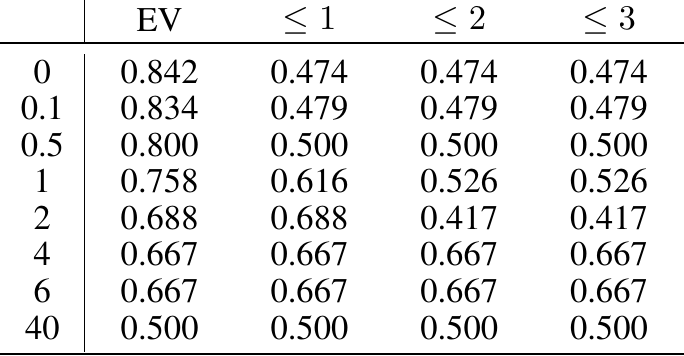}
  \caption{Leader utility when maximizing  utility against an incorrect utility function.
    Each row corresponds to a different size of uncertainty interval used for
    computing the leader strategy (interval size is given in the leftmost
    column). The columns are ordered in increasing amounts of incorrectness
    allowed in the follower utility function.}
  \label{tab:exp_misspecification_cost}
\end{table}
\npnoround

In the third set of experiments, we investigate the cost to the leader from
having to take uncertainty into account against a limited-lookahead follower. We
perform this experiment on Kuhn and 2-card, both zero-sum games, which allows us
to apply the same node-evaluation scheme as in \citet{Kroer15:Limited}. 
In order to construct the limited-lookahead evaluation function, we first
compute a Nash equilibrium of the game. We then recursively define the value of
each node to be the weighted sum over the values of nodes beneath it, where the
weights are the probabilities of each action in the Nash equilibrium, and then
add Gaussian noise to the computed value (we do not add any noise to leaf
nodes). Since the value of a node is based on the noisy value of nodes beneath
it, the farther away from leaf nodes a node is, the noisier the estimate of the
node's value (from Nash equilibrium) is. We then use our robust
limited-lookahead MIP to solve the limited-lookahead game resulting from having
the follower apply this node-evaluation function. We consider lookahead depths
of $1$ and $2$. The results for 2-card are shown in
Table~\ref{tab:lookahead_2card} and the results for Kuhn are shown in
Table~\ref{tab:lookahead_kuhn}. The different rows in the tables correspond to
varying standard deviations in the Gaussian noise, and columns correspond to
increasing sizes of uncertainty intervals. For all games, lookahead depths, and
noise levels, we see that the amount that the leader can exploit the follower
goes down as uncertainty intervals get larger. However, we also see that for
most noise amounts, some amount of robustness can be added without losing
substantial leader utility. Coupled with our results from the second set of
experiments, which showed that uncertainty intervals are necessary if there is
mis-specification in the model, this suggests that uncertainty intervals can
lead to substantially more robust outcomes, potentially at a small cost to
optimality even if the initial model turns out to be correct.

\begin{table}
  \centering\includegraphics[scale=.95]{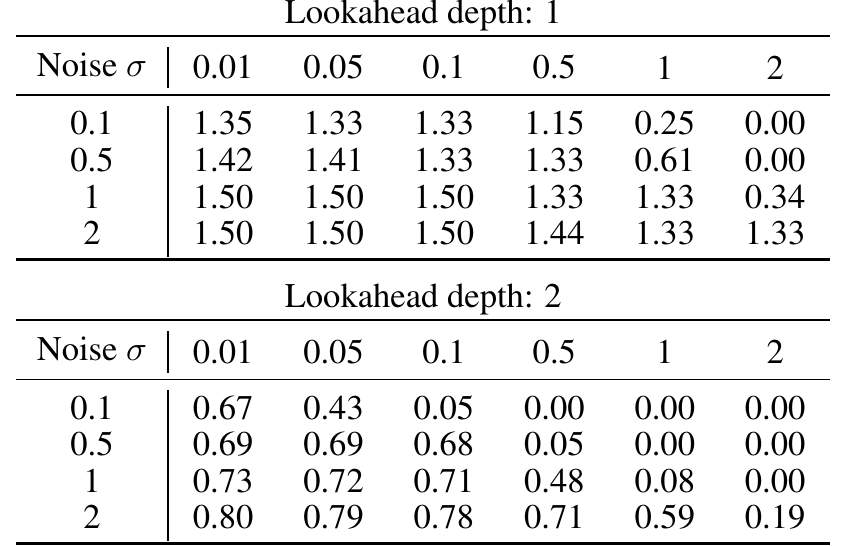}
  \caption{Limited-lookahead with depth 1 and 2 in 2-card.
    }
  \label{tab:lookahead_2card}
\end{table}

\begin{table}
  \centering\includegraphics[scale=.95]{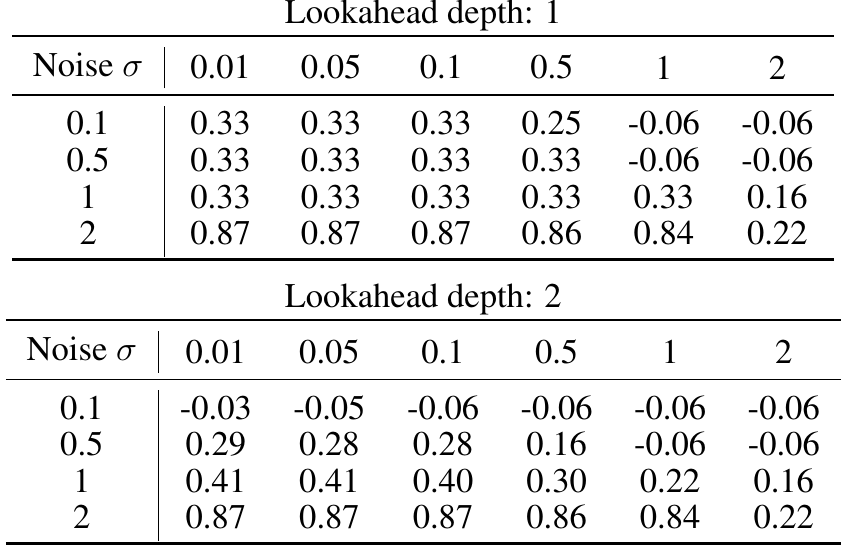}
  \caption{Limited-lookahead with depth 1 and 2 in Kuhn.
    }
  \label{tab:lookahead_kuhn}
\end{table}
 
\section{Discussion}
While we showed that our technique scales to medium-size games, in practice we
would often like to scale to even larger games. The iterative LP-based approach
of \citet{Cermak16:Using} could potentially be extended to the robust setting.
Likewise, abstraction methods have dramatically increased the scalability of
Nash equilibrium finding in EFGs
(e.g.,~\citep{Gilpin07:Lossless,Lanctot12:No,Kroer14:Extensive,Kroer16:Imperfect,Brown15:Hierarchical})
and could potentially be adapted to the robust Stackelberg setting as well. This
could be done while giving guarantees on follower behavior by only abstracting
the strategy space of the leader.

\section{Acknowledgements}
This material is based on work supported by the National Science Foundation
under grants IIS-1718457, IIS-1617590, and CCF-1733556, and the ARO under award
W911NF-17-1-0082. Christian Kroer is also sponsored by a
Facebook Fellowship.

\bibliographystyle{aaai}      \bibliography{../../dairefs/dairefs}   
\appendix

\section{Proof of Interval Uncertainty MIP Correctness}
\begin{theorem}
  \label{th:robust_mip_correct}
  The interval uncertainty MIP computes a robust Stackelberg equilibrium for an
  EFG with interval uncertainty on each leaf payoff for the follower.
\end{theorem}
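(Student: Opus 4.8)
The plan is to show that for every fixed valid leader realization plan $r_l$, the remaining variables can be set so that $y_0$ equals exactly $-\inf_{u_f\in U_f}\max_{r_f\in BR(r_l,u_f)}u_l(r_l,r_f)$, and that no feasible setting makes $y_0$ smaller. Since the objective \eqref{eq:mip_obj_robust} minimizes $y_0$ while $r_l$ ranges over all valid leader strategies, this gives $\min y_0=-\max_{r_l}\inf_{u_f}\max_{r_f\in BR}u_l$, so an optimal $r_l$ is an RSSS and the induced best responses recover an RSSE.

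First I would fix the meaning of the value variables. Applying the inductive best-response argument already established for \eqref{eq:infoset_value_constraint}--\eqref{eq:br_01} separately to each $q\in\{U,L\}$, constraints \eqref{eq:infoset_value_constraint_robust}--\eqref{eq:robust_mip_last_br} force $v_I^U,v_I^L$ to be the follower's best-response information-set values under the all-upper and all-lower instantiations, and force $v_I^q-s_{\sigma_f}^q$ to be the value of the individual sequence $\sigma_f$ under $q$. Reading \eqref{eq:optimal_actions_robust} and \eqref{eq:inoptimal_actions_robust} as big-$M$ indicators, $r_f(\sigma_f)=1$ forces $v_I^U-s_{\sigma_f}^U\geq v_I^L$ while $r_f(\sigma_f)=0$ forces $v_I^U-s_{\sigma_f}^U\leq v_I^L$, so that $\{\sigma_f:r_f(\sigma_f)=1\}$ coincides with the set $A_I(r_l)$ of actions that can be made optimal; the footnote's strict-inequality assumption discards the degenerate boundary $v_I^U-s_{\sigma_f}^U=v_I^L$.

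The crux is a lemma identifying the follower plans that use only active actions with the best responses realizable as $u_f$ ranges over $U_f$ under leader-favorable tie-breaking. For one inclusion, given a target active action at each reached information set I would exhibit a single $u_f\in U_f$ realizing it: since distinct actions at an information set lead to disjoint sets of leaves, interval independence lets me set the leaves below the chosen action to their upper bounds $U(\cdot)$ and all competing leaves to $L(\cdot)$ simultaneously, which by the definition of $A_I$ makes the chosen action a strict best response, and perfect recall guarantees these per-information-set assignments never conflict globally. For the reverse inclusion I would invoke the argument already given below the definition of $A_I(r_f)$: a non-active action is either worse for the leader, so tie-breaking discards it, or the minimization over $U_f$ breaks any tie against it. The main obstacle is exactly this lemma, namely arguing that one global $u_f$ simultaneously realizes the desired extremes at all follower information sets while the recursive values $v_I^U,v_I^L$ stay consistent with the chosen continuation play; this is where the tree structure, perfect recall, and per-leaf independence of the intervals are all essential.

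Finally I would connect the $y$-block to the worst-case leader value. Substituting $w_I=-y_I$, constraints \eqref{eq:zerosum_dual_constraint_robust} read $w_I\leq\sum_{I'}w_{I'}+\sum_{\sigma_l}g_l(\sigma_l,\sigma_f)r_l(\sigma_l)$ for every active $\sigma_f$ and are vacuous for inactive ones, so minimizing $y_0$ drives $w_0$ up to the minimum over active follower realization plans of the leader's expected utility. This mirrors the dual of the leader-value sequence-form LP of \citet{Stengel96:Efficient} restricted to the active sequences, so by LP duality $y_0=-\min_{\text{active }r_f}u_l(r_l,r_f)$; combined with the lemma this equals $-\inf_{u_f}\max_{r_f\in BR}u_l$. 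Minimizing over all feasible points, hence over all $r_l$, then yields an RSSS, and the best responses $r_f^{u_f}$ needed for the full RSSE are recovered from $r_l$ by the single-traversal-plus-LP procedure described after the RSSE definition.
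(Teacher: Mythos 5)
Your architecture tracks the paper's proof almost step for step: the per-$q$ inductive identification of $v_I^U,v_I^L$ as best-response values, the big-$M$ reading of \eqref{eq:optimal_actions_robust}--\eqref{eq:inoptimal_actions_robust} forcing $\{\sigma_f : r_f(\sigma_f)=1\}=A_I(r_l)$, the interpretation of \eqref{eq:mip_obj_robust}--\eqref{eq:zerosum_dual_constraint_robust} as the dual of a best-response LP minimizing leader utility over active sequences, and the final optimality argument are all exactly the paper's steps. Where you genuinely add something is in isolating, as an explicit lemma, the one claim the paper merely asserts (``Because $y$ is chosen to minimize the utility over active variables, this corresponds to the utility achieved when committing to $r_l$''): namely that the follower plans supported on active sequences coincide with the best responses realizable as $u_f$ ranges over $U_f$. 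You correctly flag this as the crux --- but your proof of it fails, and the lemma is false as stated. The conflict in your leaf-assignment construction is not across information sets (where perfect recall would help); it is \emph{vertical}, between an information set and its own descendants inside the chosen subtree. Realizing the chosen action $a$ at $I$ wants every leaf under $a$ at its upper bound, while realizing a deeper active action under $a$ that is not all-upper-optimal requires pushing some of those same leaves to their lower bounds. Under the natural resolution (a leaf gets $U(\cdot)$ iff the chosen action is taken at every follower information set on its path), the chosen continuation value of $a$ can fall strictly below $v_I^U(a)$ and even below a competitor's lower-bound value, so $a$ stops being a best response.

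Concretely: let the leader be trivial, let the root information set $I$ have action $b$ leading to a leaf with interval $[5,6]$ and action $a$ leading to an information set $I'$ with actions $c$ (leaf $[0,10]$) and $d$ (leaf $[4,4.5]$). All actions satisfy the footnote's strictness assumption and all are active: at $I$, $v_I^U(a)=10>5=v_I^L(b)$ and $v_I^U(b)=6>4=v_I^L(a)$; at $I'$, $v_{I'}^U(c)=10>4=v_{I'}^L(d)$ and $v_{I'}^U(d)=4.5>0=v_{I'}^L(c)$. Yet the active plan ($a$ at $I$, $d$ at $I'$) is a best response to \emph{no} $u_f\in U_f$: any $u_f$ making $d$ optimal at $I'$ caps the continuation value of $a$ at $u_f(d\text{-leaf})\leq 4.5<5\leq u_f(b\text{-leaf})$, so $b$ is strictly preferred at $I$. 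If the leader's payoff at the $d$-leaf is made very bad, the $y$-minimization evaluates $r_l$ by this unrealizable plan, so your claimed identity $y_0=-\inf_{u_f}\max_{r_f\in BR(r_l,u_f)}u_l(r_l,r_f)$ breaks, and with it the identification of the MIP optimum with the RSSS value. The one-directional inclusion you take from the discussion below the definition of $A_I(r_f)$ (realizable best responses use only active actions) is fine; it is the converse --- simultaneous realizability of an arbitrary cross-level selection of active actions by a single global $u_f$ --- that is the gap. Note that the paper's own proof is silent on precisely this cross-level consistency and asserts the correspondence without argument, so you have located the real difficulty; but your per-information-set assignment plus the appeal to perfect recall does not close it, and any repair must either restrict to realizable active plans or prove (against the example above) that the minimizing active plan is always realizable.
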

\begin{proof}
  First we show that a robust Stackelberg equilibrium corresponds to a solution
  to the MIP. Let $r_l,r_f$ be a robust Stackelberg equilibrium. Without loss of
  generality, assume that $r_f$ is a pure strategy (for any mixed-strategy best
  response, by the assumption of tie-breaking in favor of the leader, a pure
  best response exists yielding the same utility for the leader). Set all MIP
  variables $r_l(\sigma_l)$ according to the equilibrium strategies. For all
  $\sigma_f\in A_{\infoset(\sigma_f)}(r_l)$ set the corresponding MIP variable
  $r_f(\sigma_f) = 1$, and for all $\sigma_f\notin A_{\infoset(\sigma_f)}(r_l)$ set
  $r_f=0$. For any information set $I$, the action $a$ played in the pure
  strategy $r_f$ must be in $A_I(\sigma_l)$: if not it could not be made optimal
  by any utility function. Conversely, it must be the action providing the
  lowest utility to the leader among actions in $A_I(\sigma_l)$, or the
  minimization over utility functions would not have made it optimal. Choose $y$
  so as to minimize \eqref{eq:mip_obj_robust} subject to
  \eqref{eq:zerosum_dual_constraint_robust}. This corresponds exactly to minimizing the
  leader utility over the set of follower sequences such that $r_f(\sigma_f) =
  1$, and thus the objective is equal to the value of the RSSE. This can also be
  seen by realizing that (\ref{eq:mip_obj_robust},~\ref{eq:zerosum_dual_constraint_robust})
  correspond to the dual sequence-form best-response LP of an opponent wishing
  to minimize the leader utility over $A_I(r_l)$. Set $v_I^U,b_U,s_{\sigma_f}^U$
  and $v_I^L,b_L,s_{\sigma_f}^L$ equal to the values obtained by arbitrarily
  chosen best responses according to the maximal and minimal utility functions
  respectively. By our choices for MIP variables it is clear that
  \eqref{eq:zerosum_dual_constraint_robust},
  \eqref{eq:infoset_value_constraint_robust}, and
  \eqref{eq:empty_sequence_constraint_robust}-\eqref{eq:mip_probability_robust}
  are satisfied. For \eqref{eq:optimal_actions_robust} note that we set
  $r_f(\sigma_f) = 1$ only for variables in $A_{\infoset(\sigma_f)}(\sigma_l)$, that
  is, sequences where their upper-bound value is greater than every lower-bound
  value, and thus $v_I^U-s_{\sigma_f}^U$, which is exactly the upper-bound value
  associated with $\sigma_f$, is greater than $v_I^L$. Conversely, for
  $\sigma_f$ such that $r_f(\sigma_f) = 0$ we know that their upper-bound value
  is less than some lower-bound value, and thus $v_I^U-s_{\sigma_f}^U\leq v_I^L$.

  Now consider an optimal solution to the MIP. The leader strategy for an RSSE
  is exactly the values computed for $r_l(\sigma_l)$ for all $\sigma_l$. By
  the same logic as for the standard Stackelberg MIP, $v_I^U,v_I^L$ represent
  the information-set values according to the maximal and minimal utility
  functions for the follower~\cite{Shoham09:Multiagent,Bosansky15:Sequence}. Since $v_I^U-s_\sigma^U$ corresponds exactly to
  the information-set value associated with a given sequence $\sigma \in
  \Sigma_I$, \eqref{eq:optimal_actions_robust} implies that $r_f(\sigma)=1$ if
  and only if $v_I^U(\sigma)\geq v_I^L(\sigma')$ for all $\sigma'\in \Sigma_I$.
  In other words, $\sigma\in A_I(r_l)$. Conversely $r_f(\sigma)=0$ implies
  $\sigma\notin A_I(r_l)$. Thus the set of active sequences is exactly the set
  of sequences that can be made optimal for some choice of utility function for
  the follower. Because $y$ is chosen to minimize the utility over active
  variables, this corresponds to the utility achieved when committing to $r_l$.

  Since every RSSE is a solution of the MIP, and the optimal solution to
  the MIP corresponds to the payoff received by the leader if they were to
  commit to the strategy computed by the MIP, we conclude that the MIP computes
  an RSSS. If not, there would exist some RSSE which achieves better utility
  than what is computed by the MIP. This would be a contradiction, since
  such an RSSE would also be feasible and its objective would be equal to the
  RSSE value.
\end{proof}
\section{MIP for Robust Limited-Lookahead Stackelberg Equilibria}
Here we present a MIP that computes an RSSS when the follower has limited
lookahead and there is interval uncertainty about the evaluation function used
to determine actions by the follower. This MIP is a straightforward combination
of the MIP for limited lookahead and the MIP for interval uncertainty that we
presented in the main paper.
\begin{align}
  & \label{eq:mip_obj}
  \qquad\min_{r,v,s,y} y_0 \span\span\\
  & y_{\infoset_f(\sigma_f)} \geq\sum_{\mathclap{\substack{I'\in \cI_f\\\seq_f(I')=\sigma_f}}} y_{I'} - \sum_{\sigma_l \in \Sigma_l} g_l(\sigma_l,\sigma_f) r_l(\sigma_l) - M(1-r_f(\sigma_f)) \span\nonumber\\ & \forall \sigma_f \in \Sigma_f &\label{eq:zerosum_dual_constraint}
\end{align}
\begin{align}
  \label{eq:infoset_value_constraintll_robust}
  & v_{I,\infoset_f(\sigma_f)}^q = s^q_{I,\sigma_f} + \sum_{\mathclap{\substack{I'\in \cI_I\\\seq_f(I')=\sigma_f}}} v^q_{I,I'} + \sum_{\sigma_l \in \Sigma} r_l(\sigma_l)g^q_I(\sigma_l,\sigma_f), \span\nonumber\\
  && \forall I\in \mathcal{I}_f, \sigma_f \in \Sigma_I, q\in \{U,L\}  \\
  \label{eq:slack_constraints}
  &  0 \leq s^q_{I,\sigma_f} \leq M (1-b_I^q(\sigma_f))& \forall I\in\mathcal{I}_f, \sigma_f\in \Sigma_I, q\in \{U,L\} \\
  & \sum_{\mathclap{a\in A(I')}}b_I(\seq_f(I') a) = 1 & \forall I\in \mathcal{I}_f, q\in \{U,L\}, I'\in \cI_I \\ 
  & \ b_I^q(\sigma_f) \in \{0,1\} & \forall I\in\mathcal{I}_f, \sigma_f \in \Sigma_I, q\in \{U,L\}
\end{align}
\begin{align}
  &v_{I,I}^U - s_{I,\sigma}^U \geq v_{I,I}^L - M(1-r_f(\sigma)) \label{eq:optimal_actionsll_robust}\\
  & v_{I,I}^U - s_{I,\sigma}^U \leq v_{I,I}^L + Mr_f(\sigma)\label{eq:inoptimal_actionsll_robust}
\end{align}
\begin{align}
  \label{eq:empty_sequence_constraintll_robust}
  & r_i(\emptyset)=1, & \forall i\in \{l,f\} \\
  \label{eq:sequence_constraintsll_robust}
  & r_l(\sigma)=\sum_{a\in A(I)}r_l(\sigma a) & \forall I\in \cI_l, \sigma=\seq_l(I) \\ 
  & r_f(\sigma)<=\sum_{a\in A(I)}r_f(\sigma a) & \forall I\in \cI_f, \sigma=\seq_f(I)
\end{align}
\begin{align}
  & \ r_f(\sigma_f) \in \{0,1\} & \sigma_f \in \Sigma_f \\
  & 0 \leq r_l(\sigma_l) \leq 1 & \sigma_l \in \Sigma_l \label{eq:mip_probabilityll_robust}
\end{align}

\end{document}